\documentclass[11pt]{amsart}
\usepackage{amsmath,amsthm,amscd,amsfonts,amssymb}
\usepackage[all]{xy}
\usepackage[ansinew]{inputenc}


\newtheorem{thm}{Theorem}[section]
\newtheorem{cor}[thm]{Corollary}

\theoremstyle{definition}
\newtheorem{defi}[thm]{Definition}

\newtheorem{nada}[thm]{}
\theoremstyle{remark}

\numberwithin{equation}{section}

\theoremstyle{remark}


\newcommand{\la}{\langle}
\newcommand{\ra}{\rangle}

\newcommand{\cC}{{\mathcal C}}

\renewcommand{\a}{\alpha}

\def\C{\mathcal {C}}


\begin{document}

\title[Relativistic forces in Lagrangian mechanics]
 {Relativistic forces in Lagrangian mechanics}

\author[J. Mu\~{n}oz D\'\i az]{J. Mu\~{n}oz D\'\i az}

\address{ Universidad de Salamanca, Departamento de Matem\'aticas, Plaza de la Merced 1, 37008-Salamanca (Spain)}

\email{clint@usal.es}

\begin{abstract}
We give a general definition of \emph{relativistic force} in the
context of Lagrangian mechanics. Once this is done we prove that
the only relativistic forces which are linear on the velocities
are those coming from differential 2-forms defined on the
configuration space. In this sense, electromagnetic fields provide
a mechanical system with the simplest type of relativistic forces.
\end{abstract}

\maketitle

\centerline{\today}

\vskip 1cm

\section{Introduction}

It remain in the folklore of Physics the Principles of
 Relativity (Special and General),  which Einstein would not have needed
 make, if he had worked in our time, when it is assumed that the fundamental
 laws should be formulated in a geometric, intrinsic way.

Part of the folklore is also the inconsistency of the ``Newton
Mechanics'' with the ``Relativity'', and it is considered no
necessary specify where the difference lies.

In this paper is given a precise definition of what a force
 relativistic is, and it is clarified how these forces are related to
 tensor fields.
It is a bit surprising that the relativistic nature of a
 force has nothing to do with the metric of the space of
 configuration.
Fundamental relativistic laws, as the Klein-Gordon equation, hold
with any metric, and then also with or without
 limitation of the speed of interaction (see \cite{RAlonso}).

\section{The geometry of the tangent bundle and the second order equations} \label{sec:notations}

Let $M$ be a smooth manifold of dimension $n$, and $TM$ be its
tangent bundle. Each differential $1$-form $\alpha$ on $M$ can be
considered as a function on $TM$, denoted by $\dot\alpha$, which
assigns to each $v_a\in T_a M$ the value
$$\dot\alpha(v_a)=\langle\alpha_a, v_a\rangle$$ obtained by duality.
In particular, a function $f\in\cC^\infty(M)$ defines the function
on $TM$ associated to $df$ that we denote in short by $\dot f$.
This definition also applies to differential forms $\alpha$ on
$TM$ that are at each point the pull-back of a form on $M$. In the
sequel we call these forms {\it horizontal forms}.

The map $f\mapsto \dot f$  from $\cC^\infty(M)$ to
$\cC^\infty(TM)$ is a derivation of the ring $\cC^\infty(M)$
taking values in the $\cC^\infty(M)$-module $\cC^\infty(TM)$. We
denote it by $\dot d$ since it is essentially the differential.
For each horizontal form $\a$, we have $\dot\a=\la\a, \dot d\ra$
as functions on $TM$.

Using the vector space structure of the fibers of $TM$ we can
associate to each $v_a\in T_aM$ a tangent vector to $T_aM$ at each
point as the derivative along $v_a$ in $T_aM$. Denoting by $V_a$
this derivation, we have for $f\in\cC^\infty(M)$ and a point $w_a\in
T_aM$:
\begin{equation*}
 V_a(\dot f)(w_a)=\lim_{t\rightarrow 0}
              \frac{\dot f(w_a+tv_a)-\dot f(w_a)}{t}=\dot
              f(v_a)=v_a(f)\ .
 \end{equation*}

At each $w_a\in T_aM$,  $V_a\in T_{w_a}(T_aM)$ is called the
\emph{vertical representative} of  $v_a\in T_aM$ and $v_a$ the
\emph{geometric representative} of $V_a$.

\begin{defi} \textbf{\em (Second Order Differential Equation).}\label{def: second_order_equation}
A vector field $D$ on $TM$ is a \emph{second order differential
equation} when its restriction (as derivation) to the subring
$\cC^\infty(M)$ of $\cC^\infty(TM)$ is $\dot d$. This is
equivalent to have $\pi_*(D_{v_a})=v_a$ at each point $v_a\in
T_aM$ (where $\pi\colon TM\to M$ denotes the canonical
projection).
\end{defi}

\begin{defi}\textbf{\em (Contact System).} \label{def: Pfaff_system}
The \emph{contact system} on $TM$ is the Pfaff system in $TM$ which
consists of all the $1$-forms annihilating all the second order
differential equations. It will be denoted by $\Omega$.
\end{defi}

The forms in the contact system also annihilate the
differences of second order differential equations, i.e. all
vertical fields. Therefore, they are horizontal forms; each
$\omega_{v_a}\in\Omega_{v_a}$ is the pull-back to $T^*_{v_a}TM$ of
a form in $T^*_aM$. Now, a horizontal $1$-form kills a second
order differential equation if and only if it kills the field
$\dot d$. Thus the contact system on $TM$ consists of the
horizontal $1$-forms which annihilate $\dot d$: a horizontal form $\alpha$ is contact if and only if $\dot\alpha=0$.
\medskip

\begin{nada}
\label{nada:local0} {\bf Local coordinate expressions.}
\end{nada}
We take local coordinates $q^i$, $i=1,\dots,n,$ in $M$ and
corresponding $q^i,\dot q^i$, in $TM$. We have, using Einstein
summation convention,

\begin{equation*}
\dot d=\dot q^i\frac{\partial}{\partial q^i}\ \ .
\end{equation*}

A vertical field has the expression
 \begin{equation*}
 V=f^i(q,\dot q)\frac{\partial}{\partial \dot q^i} \ \ ,
 \end{equation*}
and the one for a second order differential equation is
\begin{equation*}
D=\dot q^i\frac{\partial}{\partial q^i}+f^i(q,\dot
q)\frac{\partial}{\partial \dot q^i}\ \ .
\end{equation*}
Usually we will denote $f^i$ by $\ddot q^i$ understanding that it
is a given function of the $q$'s and $\dot q$'s.

A local system of generators for the contact system $\Omega$, out
of the zero section, is given by
\begin{equation*}
\dot q^idq^j-\dot q^jdq^i\quad (i,j=1,\dots,n)\ .
\end{equation*}

\bigskip

\section{Structure of a second order differential equation relative to a metric} \label{sec:second_order_equations}

Let $T^*M$ be the cotangent bundle of $M$ and $\pi\colon T^*M\to M$
the canonical projection. Recall that the \emph{Liouville form}
$\theta$ on $T^*M$ is defined by $\theta_{\alpha_a}=\pi^*(\alpha_a)$
for $\alpha_a\in T^*_aM$. Abusing the notation we can write
$\theta_{\alpha_a}=\alpha_a$.
\medskip

The $2$-form $\omega_2=d\theta$ is the natural symplectic form
associated to $T^* M$. In local coordinates $(q^1,\dots, q^n)$ in
$M$, and corresponding $(q^1,\dots, q^n,p_1,\dots, p_n)$ for
$T^*M$, we have

\begin{equation*}
\theta=p_idq^i,\qquad \omega_2=dp_i\wedge dq^i \ .
\end{equation*}

Let $T_2$ be a (non-degenerate) pseudo-Riemannian metric in $M$.
Then we have an isomorphism of vector fiber bundles
\begin{align*}
TM &\to T^*M\\
v_a &\mapsto i_{v_a}T_2
\end{align*}
($i_{v_a}T_2$ is the inner contraction of $v_a$ with $T_2$). Using
the above isomorphism we can transport to $TM$ all structures on
$T^*M$. In particular, we work with the Liouville form $\theta$
and the symplectic form $\omega_2$ transported in $TM$ with the
same notation.

From the definitions we have for the Liouville form in $TM$, at
each $v_a\in T_aM$,
\begin{equation}\label{Liouville}
\theta_{v_a}=i_{v_a}T_2\ ,
\end{equation}
where the form of the right hand side is to be understood
pulled-back from $M$ to $TM$.

\begin{defi}\textbf{\em (Kinetic Energy).}\label{cinetica}
The function $T=\frac 12\,\dot\theta$ on $TM$ is the kinetic
energy associated to the metric $T_2$. So, for each $v_a\in TM$,
we have $T(v_a)=\frac 12 \,\dot\theta(v_a)=\frac 12 T_2(v_a,v_a)$.
\end{defi}
Then, it can be proved the following main result
\begin{thm}[\cite{MecanicaMunoz}]\label{teoremaalfa}
The metric $T_2$ establishes a one-to-one correspondence between
second order differential equations and horizontal $1$-forms in
$TM$.

The second order differential equation $D$ and the horizontal
$1$-form $\a$ that correspond to each other are related by
\begin{equation}\label{formulaalpha}
i_D\omega_2+dT+\alpha=0\ .
\end{equation}
\end{thm}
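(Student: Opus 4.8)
The plan is to build the correspondence on two facts already at hand: $\omega_2$ is a symplectic, hence non-degenerate, $2$-form on $TM$, and a $1$-form on $TM$ is horizontal exactly when it annihilates every vertical vector field (the vertical vectors at $v_a$ being precisely $\ker(\pi_*)_{v_a}=T_{v_a}(T_aM)$).

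First I would dispose of bijectivity using only non-degeneracy of $\omega_2$. The assignment $D\mapsto i_D\omega_2$ is an isomorphism of $\cC^\infty(TM)$-modules from the vector fields on $TM$ onto the $1$-forms on $TM$; hence for every $1$-form $\alpha$ there is one and only one vector field $D$ with $i_D\omega_2=-dT-\alpha$, and conversely $D$ yields $\alpha:=-(i_D\omega_2+dT)$. So \eqref{formulaalpha} already defines a bijection between all vector fields and all $1$-forms on $TM$, and the entire content of the theorem reduces to checking that under this bijection second order differential equations correspond to horizontal $1$-forms; that is,
\[
D \text{ is a second order differential equation} \iff i_D\omega_2+dT \text{ is horizontal.}
\]

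To prove this equivalence I would test against verticals. By the criterion above, $i_D\omega_2+dT$ is horizontal iff $\omega_2(D,V)+V(T)=0$ for every vertical field $V$, and it is enough to check this at an arbitrary $v_a\in T_aM$ against the vertical representatives $W_a$ of the vectors $w_a\in T_aM$, since their values at $v_a$ exhaust $T_{v_a}(T_aM)$. The term $W_a(T)$ is obtained by differentiating $T(u)=\tfrac12 T_2(u,u)$ along the fibre (polarisation of a quadratic form): $W_a(T)\big|_{v_a}=T_2(v_a,w_a)$. For $\omega_2(D,W_a)$ I would move to $T^*M$ by the metric identification $u\mapsto i_uT_2$, where $\omega_2=d\theta=dp_i\wedge dq^i$: the vertical field $W_a$ becomes the fibrewise vector with components those of $i_{w_a}T_2$, so $i_{W_a}\omega_2$ is the horizontal $1$-form $i_{w_a}T_2$ (pulled back to $TM$), and therefore $\omega_2(D,W_a)=-(i_{W_a}\omega_2)(D)=-\langle i_{w_a}T_2,\pi_*(D_{v_a})\rangle=-T_2\bigl(w_a,\pi_*(D_{v_a})\bigr)$. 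Adding the two contributions, horizontality of $i_D\omega_2+dT$ at $v_a$ says $T_2\bigl(w_a,\,v_a-\pi_*(D_{v_a})\bigr)=0$ for all $w_a\in T_aM$, which by non-degeneracy of $T_2$ is equivalent to $\pi_*(D_{v_a})=v_a$, i.e. to $D$ being a second order differential equation. Every step is reversible, so the converse implication holds too, and the bijection \eqref{formulaalpha} restricts to the asserted one-to-one correspondence.

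The one genuinely delicate point is the evaluation of $i_{W_a}\omega_2$, i.e. keeping track of how $\theta$ and $\omega_2$ behave on vertical directions once transported from $T^*M$ to $TM$ by $T_2$; the rest is just the twofold use of non-degeneracy (of $T_2$ and of $\omega_2$) together with the elementary fibrewise differentiation giving $W_a(T)=T_2(v_a,w_a)$. Should one prefer to avoid the passage to $T^*M$, the same equivalence follows from a direct coordinate computation: writing $\theta=g_{ij}\dot q^j\,dq^i$ and $T=\tfrac12 g_{ij}\dot q^i\dot q^j$, one finds that the $d\dot q^i$-part of $i_D\omega_2+dT$ equals $g_{ij}(\dot q^i-A^i)$ for $D=A^i\,\partial/\partial q^i+B^i\,\partial/\partial\dot q^i$, and this vanishes for every $i$ precisely when $A^i=\dot q^i$; the $dq^i$-part then expresses the components $B^i$ of the ``acceleration'' in terms of $\alpha$ and Christoffel-type terms, so the correspondence is visibly bijective.
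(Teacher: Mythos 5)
Your argument is correct. Note that the paper itself offers no proof of this theorem: it is quoted from \cite{MecanicaMunoz}, so there is no internal argument to compare against. Your proof is a sound, self-contained reconstruction, and its structure is the natural one: non-degeneracy of $\omega_2$ makes $D\mapsto -(i_D\omega_2+dT)$ a bijection between all vector fields and all $1$-forms on $TM$, so everything reduces to the equivalence ``$D$ is a second order differential equation $\iff$ $i_D\omega_2+dT$ annihilates the verticals.'' Your two key computations --- $W_a(T)\big|_{v_a}=T_2(v_a,w_a)$ by polarisation along the fibre, and $i_{W_a}\omega_2=\pi^*(i_{w_a}T_2)$ obtained by transporting $W_a$ to $T^*M$ (where $\partial/\partial\dot q^k\mapsto g_{jk}\,\partial/\partial p_j$, so that $i_{W_a}(dp_i\wedge dq^i)=g_{ik}w^k\,dq^i$) --- are both correct, and combining them gives $\bigl(i_D\omega_2+dT\bigr)(W_a)=T_2\bigl(w_a,\,v_a-\pi_*(D_{v_a})\bigr)$, whence the conclusion by non-degeneracy of $T_2$. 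You also correctly identify the one point that needs genuine care, namely the evaluation of $\omega_2$ on vertical directions after transport by the metric; the closing coordinate verification (the $d\dot q^i$-component of $i_D\omega_2+dT$ being $g_{ij}(\dot q^i-A^i)$) is consistent with the paper's local formulas (\ref{Liouvilletangente})--(\ref{alfa}) and confirms the intrinsic computation. No gaps.
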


The triple $(M,T_2, \alpha )$ will be called a \emph{mechanical
system}, where $M$ is the \emph{configuration space} which is
provided with a pseudo-Riemannian metric $T_2$ (to which
corresponds a \emph{kinetic energy} $T$ by (\ref{cinetica})), and
a \emph{work-form} or \emph{force form} $\alpha$ on $TM$. Finally,
correspondence (\ref{formulaalpha}) will be called the {\emph
Newton Law}: under the influence of a force $\alpha$ the
trajectories of the mechanical system satisfy (are the integral
curves of) the second orden differential equation $D$ associated with $\alpha$.

In particular, we consider the case of a mechanical system that is
undisturbed by any force.

\begin{defi}\textbf{\em (Geodesic Field).} \label{campogeodesico}
The geodesic field of the metric $T_2$ is the second order
differential equation, $D_G$, corresponding to $\alpha=0$:
\begin{equation}\label{ecuaciongeodesica}
i_{D_G}\omega_2+dT=0 \ .
\end{equation}
\end{defi}
The projection to $M$ of the curves solution of $D_G$ in $TM$ are
the geodesics of $T_2$. The geodesic field $D_G$ is chosen as the
origin in the affine bundle of second order differential
equations. With this choice we establish a one-to-one
correspondence between second order differential equations and
vertical tangent fields.
$$
D\longleftrightarrow W:=D-D_G.
$$
We define the covariant value of the second order differential
equation $D$, denoted by $D^\nabla$, as the field in $TM$ taking
values in $TM$ corresponding canonically to $W=D-D_G$.

\begin{nada}
\label{nada:local1} {\bf Local coordinate expressions.}
\end{nada}

Consider an open set of $M$ with coordinates $q^1,\dots,q^n$ and
the corresponding open set in $TM$ with coordinates
$q^1,\dots,q^n,\dot q^1,\dots,\dot q^n$. If the expression in
local coordinates of $T_2$ is
\begin{equation}\label{metrica}
T_2=g_{jk}\ dq^jdq^k
\end{equation}
then the local equations for the isomorphism $TM\approx T^*M$ are
\begin{equation}\label{isomorfismo}
p_j=g_{jk}\ \dot q^k.
\end{equation}
The Liouville form in $TM$ is given by
\begin{equation}\label{Liouvilletangente}
\theta=g_{jk}\ \dot q^kdq^j
\end{equation}
and for the kinetic energy we have, locally,
\begin{equation}\label{energiacinetica}
 T=\frac 12\, g_{ij}\dot q^i \dot q^j,\quad\text{so that}\quad p_j=\frac {\partial T}{\partial \dot q^j} \ .
\end{equation}
Let the second order differential equation $D$ be given by
\begin{equation}\label{formulaD}
D=\dot q^i\frac{\partial}{\partial q^i}+\ddot q^i
\frac{\partial}{\partial \dot q^i} \ ,
\end{equation}
where the $\ddot q^i$'s are given function of $q$'s and $\dot q$'s.

Now
\begin{equation}\label{alfa}
\alpha=-g_{lk}(\ddot q^l+\Gamma_{ij}^l\dot q^i\dot q^j) dq^k
\end{equation}
is the horizontal $1$-form related to $D$ by formula
(\ref{formulaalpha}). Equivalently,
\begin{equation}\label{alfa2}
\ddot q^l=-\left(g^{lk}\alpha_k+\Gamma_{ij}^l\dot q^i\dot
q^j)\right)
\end{equation}
In particular, for the geodesic field we have,
\begin{equation}\label{geodesico}
D_G=\dot q^i\frac{\partial}{\partial q^i}-\Gamma_{ij}^l \dot q^i
\dot q^j \frac{\partial}{\partial \dot q^l}
\end{equation}
and, finally, the covariant value of $D$ is
\begin{equation}\label{coordenadascovariante}
 D^\nabla=(\ddot q^l+\Gamma_{ij}^l\dot q^i\dot q^j)
              \frac{\partial}{\partial q^l}=
              -g^{lk}\alpha_k
              \frac{\partial}{\partial q^l}\ .
\end{equation}

\section{Relativistic forces}

For physicists, the motion of each point particle in Relativity
(yet in the special one) is parametrized by the ``proper time'' of
that particle, whose ``infinitesimal element'' $ds$ is the so
called lenght element associated with $T_2$; thus, in Special
Relativity, it is written
$$ds^2=dt^2-dx^2-dy^2-dz^2=(dx^0)^2-\sum_1^3(dx^i)^2$$
(by taking units such that the velocity of the light is $c=1$).

For any metric, the length element is $\theta/\sqrt{\dot\theta}$,
where $\theta$ is the Liouville form. In this way,  the
classical $ds$ is the restriction of such a length element
to the curve in $TM$ describing the lifting of the corresponding
parametrized curve in $M$. It turns out that length does not
depend on the parametrization, and this is the reason for which it
has sense  to talking about the length of a curve.

If $D$ is a second order differential equation on $M$, when we say
that a curve solution can be parametrized  by the length element
$ds$ we means that the proper parameter for such a curve solution
of $D$ is the specialization of $\theta/\sqrt{\dot\theta}$; that
is to say,
$\theta(D)/\sqrt{\dot\theta}=\dot\theta/\sqrt{\dot\theta}=1$.
Thus, $$\dot\theta=0\quad\text{ or}\quad \dot\theta=\pm 1$$ on
such a curve solution of $D$.

Therefore, the second order differential equations on $M$ which
describe relativistic motions are vector fields $D$ (on manifold
$TM$) tangent to the hypersurfaces $\dot\theta=0$, $\dot\theta=1$,
and  $\dot\theta=-1$. Let us put $D=D_G+W$. From the
Newton-Lagrange equation $i_D\omega_2+dT+\alpha=0$ we get
$$0=i_Di_D\omega_2+i_D dT+i_D\alpha=DT+\dot\alpha$$
on such submanifolds. As a consequence,
\begin{equation}\label{EcuacionT}
WT+\dot\alpha=0
\end{equation}
when $\dot\theta=0,\pm 1$. On the other hand, field $W$ is
vertical and function $T$ is homogeneous of second degree, so that
$$WT=\langle\theta,w\rangle,$$
where $w$ is the geometric representative of $W$ (i.e.,
$i_W\omega_2=i_wT_2$). In local coordinates,
\begin{equation}\label{coordenadasA}
\langle\theta,w\rangle=g_{ij}\,\dot q^i w^j,
\end{equation}
 where
$w=w^i\partial/\partial q^i$. If the $w^i$'s are homogeneous with
respect to dotted coordinates $\dot q^i$, expression
(\ref{coordenadasA}) vanishes for $\dot\theta=1$ if and only
vanishes for all $\dot\theta$.

Therefore, `relativistic' fields $D=D_G+W$ such that $W$ is
homogeneous with respect to velocities, have to be tangent to each
manifold $\dot\theta=k$, $k\in \mathbb{R}$. That is to say,
$\dot\theta=2T$ is a first integral of $D$ and then, also of $W$.
Since Equation (\ref{EcuacionT}) we derive $\dot\alpha=0$. In
other words, $\alpha$ is a contact differential form.

On the other hand, each second order differential equation $D$ on
$M$ can be modified by means of a ``relativistic constraint''
(analogous to the time constraints, see \cite{MecanicaMunoz}) in
order to be tangent to manifolds $\dot\theta=\text{constant}$. In
the case of $D$ being tangent to $\dot\theta=0$, $\dot\theta=1$,
$\dot\theta=-1$, on these hypersurfaces does not suffer any
modification. For all the above, it seems reasonable the following

\begin{defi}\label{DefRelativista}
A \emph{relativistic field} on $(M,T_2)$ is a second order
differential equation $D$ such that $DT=0$, where $T$ is the
kinetic energy associated with $T_2$.
\end{defi}

The previous discussion gives us
\begin{thm}\label{criterio}
A second order differential equation $D$ on $(M,T_2)$ is a
relativistic field if and only if the work form $\alpha$
associated by virtue of the Newton-Lagrange law
($i_D\omega_2+dT+\alpha=0$) belongs to the contact system $\Omega$
of $M$.
\end{thm}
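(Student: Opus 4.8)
The plan is to unwind the definition of "relativistic field" ($DT = 0$) using the Newton-Lagrange law and the structure results from Section 3, reducing everything to the identity $DT = \dot\alpha$. The key algebraic fact is that contracting the Newton-Lagrange equation $i_D\omega_2 + dT + \alpha = 0$ with $D$ itself kills the symplectic term, because $i_D i_D \omega_2 = 0$ for any $2$-form. This already appears in the discussion preceding the statement; I would simply make it the backbone of the proof.

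**First I would** observe that, since $D$ is a second order differential equation, its restriction to $\cC^\infty(M)$ is $\dot d$, so for the horizontal $1$-form $\alpha$ we have $i_D\alpha = \dot\alpha$ (using $\dot\alpha = \langle\alpha,\dot d\rangle$ from Section~2, together with the fact that $\alpha$ is horizontal and $D$ projects to $v_a$). **Then** contracting $i_D\omega_2 + dT + \alpha = 0$ with $D$ gives
\begin{equation*}
0 = i_D i_D \omega_2 + i_D(dT) + i_D\alpha = DT + \dot\alpha,
\end{equation*}
so $DT = -\dot\alpha$. **Therefore** $DT = 0$ holds identically on $TM$ if and only if $\dot\alpha = 0$ identically, and by the characterization of the contact system established in Section~2 (a horizontal $1$-form $\alpha$ is contact precisely when $\dot\alpha = 0$), this is equivalent to $\alpha \in \Omega$. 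That completes both directions at once.

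**The main thing to be careful about** is the claim $i_D\alpha = \dot\alpha$: one must check that for a horizontal form $\alpha$ and \emph{any} second order differential equation $D$, the contraction $i_D\alpha$ depends only on the horizontal (i.e.\ $\dot d$) part of $D$, which holds because $\alpha$ annihilates all vertical fields and $D - \dot d$ is vertical in the relevant sense — more precisely, $\langle\alpha_{v_a}, D_{v_a}\rangle = \langle\alpha_a, \pi_*D_{v_a}\rangle = \langle\alpha_a, v_a\rangle = \dot\alpha(v_a)$. **I expect this identification to be the only nontrivial point**; once it is in hand, the proof is the one-line contraction computation above followed by the contact-system characterization. I would also remark that this recovers, and makes rigorous, the informal derivation of equation~(\ref{EcuacionT}) given in the preceding paragraphs, now without any homogeneity hypothesis on $W$, since the relation $DT = -\dot\alpha$ is an identity on all of $TM$ rather than only on the hypersurfaces $\dot\theta = \text{const}$.
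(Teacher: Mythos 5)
Your proposal is correct and follows essentially the same route as the paper, whose (implicit) proof is precisely the contraction of the Newton--Lagrange equation with $D$ to obtain $0=i_Di_D\omega_2+i_D\,dT+i_D\alpha=DT+\dot\alpha$, combined with the characterization of contact forms as horizontal $1$-forms with $\dot\alpha=0$. Your explicit verification that $i_D\alpha=\dot\alpha$ for horizontal $\alpha$, and your remark that the identity $DT=-\dot\alpha$ holds on all of $TM$ rather than only on the hypersurfaces $\dot\theta=\mathrm{const}$, are welcome clarifications but do not change the argument.
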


\begin{cor}\label{corolario1}
The differential 1-forms $\alpha$ corresponding, as work forms, to
relativistic fields, are the same, independently of the metric
$T_2$: they are those belonging to the contact system $\Omega$ of
$TM$.
\end{cor}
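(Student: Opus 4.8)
The plan is to read the statement off directly from \thmref{criterio} together with the observation that every piece of data entering the contact system is already metric-free. First I would recall that the notion of second order differential equation on $M$ (Definition~\ref{def: second_order_equation}) makes no reference to a metric, and neither does the contact system $\Omega$ (Definition~\ref{def: Pfaff_system}): by the discussion following that definition, $\Omega$ consists precisely of the horizontal $1$-forms $\alpha$ on $TM$ with $\dot\alpha=0$, a condition formulated purely in terms of the canonical derivation $\dot d$ and the smooth structures of $M$ and $TM$. In particular both ``horizontal $1$-form'' and ``contact $1$-form'' are notions belonging to Section~\ref{sec:notations} alone.

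Next I would invoke \thmref{teoremaalfa}: once a pseudo-Riemannian metric $T_2$ is fixed, formula (\ref{formulaalpha}) sets up a bijection $D\leftrightarrow\alpha$ between second order differential equations on $M$ and horizontal $1$-forms on $TM$, and this bijection is surjective onto the full space of horizontal $1$-forms. By \thmref{criterio}, under this bijection the relativistic fields — those $D$ with $DT=0$ — correspond exactly to the $1$-forms $\alpha$ lying in $\Omega$. Hence, for every admissible choice of $T_2$, the set of work forms attached to relativistic fields is exactly $\Omega$.

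Finally, since $\Omega$ was identified in the first step with a subset of the space of horizontal $1$-forms that is defined without any reference to $T_2$, this subset is literally the same for every metric. Thus the correspondence itself changes with $T_2$ (a given $\alpha\in\Omega$ is realized by different second order fields $D$ for different metrics, via equation (\ref{alfa2})), but the collection of admissible work forms does not. I do not expect any genuine obstacle here; the corollary is a direct consequence of \thmref{criterio}, and the only point that deserves an explicit line is the metric-independence of the defining condition $\dot\alpha=0$ for membership in $\Omega$, which is immediate from the constructions of Section~\ref{sec:notations}.
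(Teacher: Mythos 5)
Your proposal is correct and follows the paper's own (implicit) route exactly: the corollary is read off from \thmref{criterio} combined with the observation that the contact system $\Omega$ is defined in Section~\ref{sec:notations} purely from the smooth structure of $TM$, with no reference to $T_2$. Your added remark that the bijection $D\leftrightarrow\alpha$ of \thmref{teoremaalfa} varies with the metric while its image $\Omega$ does not is precisely the point the corollary is making.
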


\begin{cor}\label{colorario2}
A mechanical system $(M,T_2,\alpha)$ is relativistic if and only
if, for each parametrized curve on $M$, solution of the system,
the tangent field $u$ has constant length (that is to say,
$T_2(u,u)=\|u\|^2$ is constant along the curve).
\end{cor}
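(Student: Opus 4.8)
The plan is to translate the algebraic condition ``$\alpha\in\Omega$'' supplied by Theorem \ref{criterio} into the geometric statement about curve solutions, going through the kinetic energy $T$ as the intermediary. The starting observation is that, by Theorem \ref{criterio}, the mechanical system $(M,T_2,\alpha)$ is relativistic precisely when $D$ is a relativistic field, i.e.\ when $DT=0$; so the whole task reduces to showing that $DT=0$ (identically on $TM$) is equivalent to ``$T$ is constant along every lifted solution curve'', and then to rewriting that last condition in terms of $u$ and $\|u\|^2$.

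First I would recall that a parametrized curve $\sigma(t)$ in $M$ is a solution of the system exactly when its canonical lift $\dot\sigma(t)\in TM$ (the velocity curve) is an integral curve of $D$; this is what Definition \ref{def: second_order_equation} and the Newton Law in Theorem \ref{teoremaalfa} give us. Along such an integral curve, for any function $F$ on $TM$ one has $\tfrac{d}{dt}\big(F(\dot\sigma(t))\big)=(DF)(\dot\sigma(t))$. Applying this with $F=T$: if $DT=0$ then $T(\dot\sigma(t))$ is constant along every solution; conversely, if $T$ is constant along every solution curve, then $(DT)(\dot\sigma(t))=0$ for every integral curve of $D$, and since through every point $v_a\in TM$ there passes an integral curve of $D$ (so the velocity curves $\dot\sigma$ sweep out all of $TM$), we conclude $DT=0$ everywhere. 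Hence $(M,T_2,\alpha)$ is relativistic $\iff$ $T$ is constant along every solution curve.

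It remains only to identify $T$ along a lifted curve with $\tfrac12\|u\|^2$. If $u=\dot\sigma(t)$ is the tangent (velocity) field of the parametrized solution curve $\sigma$, then by Definition \ref{cinetica} we have $T(\dot\sigma(t))=\tfrac12 T_2(\dot\sigma(t),\dot\sigma(t))=\tfrac12\,T_2(u,u)=\tfrac12\|u\|^2$. Therefore $T$ is constant along the lifted curve if and only if $T_2(u,u)=\|u\|^2$ is constant along $\sigma$, which is the assertion of the corollary.

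I do not expect a genuine obstacle here; the only point requiring a word of care is the ``conversely'' direction, where one must invoke that the integral curves of $D$ fill $TM$ (equivalently, that the second-order equation $D$ has, through each $v_a\in T_aM$, a solution curve in $M$ whose lift passes through $v_a$) in order to upgrade ``$DT$ vanishes along every solution'' to ``$DT\equiv 0$''. This is immediate from the existence theorem for ODEs applied to the vector field $D$ on $TM$, together with the fact that $\pi_*(D_{v_a})=v_a$, so that the projected curve is genuinely a solution of the system with the prescribed initial velocity.
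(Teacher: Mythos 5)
Your proof is correct, but it takes a different and more elementary route than the paper. You work directly from Definition \ref{DefRelativista}: relativistic means $DT=0$, which by the chain rule along integral curves (plus the fact that the lifts $\dot\sigma$ of solution curves sweep out all of $TM$, since $D$ is a second order equation and ODE solutions exist through every $v_a$) is equivalent to $T=\tfrac12 T_2(u,u)$ being constant along every solution. The paper instead starts from the characterization $\dot\alpha=0$ of Theorem \ref{criterio} and passes through the covariant value of $D$: along a solution, $D^\nabla=u^\nabla u=-\mathrm{grad}\,\alpha$, so pairing with $u$ gives $T_2(u^\nabla u,u)=-\dot\alpha(u)$, and the metric-compatibility identity $T_2(u^\nabla u,u)=\tfrac12\,u\bigl(T_2(u,u)\bigr)$ finishes the argument. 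Your version is shorter and avoids the connection entirely; the paper's version buys the extra physical insight that constancy of $\|u\|^2$ is precisely the statement that the force does no work on the velocity ($\dot\alpha=0$, the contact condition), which is the thread the surrounding discussion follows. Both arguments rely on the same "initial conditions are arbitrary" step to pass from a pointwise identity on $TM$ to a statement about all solution curves, and you are right to flag that as the one point needing care.
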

\begin{proof}
We have $-\textrm{grad}(u)=D^\nabla=u^\nabla u$ along the given
curve. By taking scalar product by $u$, it holds $T_2( u^\nabla u
,u)=-\alpha(u)=-\dot\alpha(u)$;  we can arbitrarily to fix a
tangent vector in $M$ as an initial condition for a second order
differential equation; thus, we derive that $\dot\alpha=0$
$\Leftrightarrow$
 $T_2(u^\nabla u,u)=0$ for all curve solution of $D$.
 Then the  result follows from the identity
 $T_2( u^\nabla u,u)= (1/2) u\left(T_2(u,u)\right)$.
 \end{proof}

 In the sense specified in Corollary \ref{colorario2},
 relativistic fields are the natural generalization of the geodesic
 field $D_G$, the one corresponding with $\alpha=0$ (in the
 contact system).

 At first glance, it is noteworthy that relativistic forces are characterized
 by corresponding to forms of work that are defined before any metric (the contact
 system of $TM$) and, on the other hand, given the metric, these forces are characterized
 by having as solutions curves whose tangent vector is of constant length, or, equivalently, by
 preserving the kinetic energy associated to the metric.

In local coordinates $q^1,\dots,q^n$ for $M$, a local basis of the
contact system in $TM$ is the comprised by the forms $\dot
q^i\,dq^j-\dot q^j\,dq^i$, ($i<j$). The 2-form $A_{ij}:=dq^i\wedge
dq^j$ holds
$$i_{\dot d}A_{ij}=\dot q^i\,dq^j-\dot q^j\,dq^i.$$
Therefore, for each relativistic force, its 1-form of work
$\alpha$ is $i_{\dot d}\Phi_2$ where $\Phi_2$ is a ``horizontal''
2-form, linear combination of the $A_{ij}$' with coefficients in
$\C^\infty(TM)$. That is to say, $\Phi_2$ is a field on $TM$ with
values in $\Lambda_2M$ (the fiber bundle of the differential
2-forms). Or, which is the same, $\Phi_2$ is a section of
$TM\times_M\Lambda_2M\to TM$. For each contact 1-form $\alpha$,
there exists a tensor field $\Phi_2$ with $i_{\dot
d}\Phi_2=\alpha$; such a $\Phi_2$ is not unique: it is
undetermined up to a sum of forms $\phi^i\alpha_i\wedge\beta_i$
where the $\alpha_i$, $\beta_i$ are contact forms, and
$\phi^i\in\C^\infty(TM)$.

Let $(M,T_2,\alpha)$ be a mechanical system whose equation of
motion $D=D_G+W$ corresponds with a force (vertical tangent field
in $TM$) $W$ which depends linearly on the velocities. This means
that, at the point $(a,v_a)\in T_aM$, the vector $w_a$ is the
transformed of $v_a$  through a linear transformation
$\Phi_a\in\textrm{Hom}(T_aM,T_aM)$. Letting the point $a$ run
along $M$ we will have a tensor field $\Phi$ on $M$ (a field of
endomorphisms) such that $w=\Phi(\dot d)$; then, $W=\Phi(V)$,
where $V$ is the tangent field associated with $\dot d$ ($V=\dot
q^i\partial/\partial \dot q^i$, in local coordinates). The work
form $\alpha$ and the force $W$ are related be means of the
Newton-Lagrange equation $i_W\omega_2+\alpha=0$ or
$i_wT_2+\alpha=0$.

If $\alpha$ is a relativistic force, then $\alpha=i_{\dot
d}\Phi_2$ for a differential 2-form on $TM$ with values in
$\Lambda^2M$ (for short, a \emph{horizontal 2-form}). Thus,
$i_{\Phi(\dot d)}T_2+i_{\dot d}\Phi_2=0$; by contracting with
$\dot d$ it gives us $T_2(\Phi(\dot d),\dot d)=0$, which is
equivalent to $\Phi$ being the field endomorphism associated with
an alternate tensor field:
$$\Phi(\dot d)=-\textrm{grad} i_{\dot d}\Phi_2.$$

Note that for any 2-form $\Phi_2$ on $M$, the contact 2-form
$i_{\dot d}\Phi_2$ completely determines $\Phi_2$. As a
consequence
\begin{thm}\label{caracterizacion}
The forces (vertical vector fields on $TM$) which are relativistic
and linearly depend on the velocities are canonically associated
with 2-forms on $M$, once the metric $T_2$ is given on $M$.
\end{thm}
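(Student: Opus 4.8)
The plan is to translate each hypothesis on the force $W$ into an algebraic condition on a tensor field living on $M$, and then to dualize it with the metric $T_2$.

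First I would make the linearity hypothesis precise. To say that the vertical field $W$ on $TM$ takes at $v_a$ a value depending linearly on $v_a$ is exactly to say that its geometric representative is of the form $w=\Phi(\dot d)$ for a field of endomorphisms of the tangent bundle, i.e. a $(1,1)$-tensor field $\Phi$ on $M$ (it is precisely linearity in the velocities that makes this endomorphism field descend from $TM$ to $M$), so that $W=\Phi(V)$. Substituting into the Newton--Lagrange equation for a vertical force, $i_W\omega_2+\alpha=0$, which by $i_W\omega_2=i_wT_2$ reads $i_wT_2+\alpha=0$, identifies the horizontal $1$-form associated with $W$ as $\alpha=-i_{\Phi(\dot d)}T_2$.

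Next I would invoke \thmref{criterio}: $D=D_G+W$ is a relativistic field if and only if $\alpha$ is contact, hence, $\alpha$ being horizontal, if and only if $\dot\alpha=0$. Now $\dot\alpha=\langle\alpha,\dot d\rangle=-T_2(\Phi(\dot d),\dot d)$, so the vanishing of $\dot\alpha$ throughout $TM$ amounts, at each $a\in M$, to $T_2(\Phi X,X)=0$ for all $X\in T_aM$, which — polarizing and using the symmetry of $T_2$ — is equivalent to $\Phi$ being skew-adjoint for $T_2$. Thus ``a relativistic force depending linearly on the velocities'' is the same datum as ``a $T_2$-skew-adjoint $(1,1)$-tensor $\Phi$ on $M$'', the correspondence $W\leftrightarrow\Phi$ being metric-free. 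Finally I would observe that lowering one index with the metric, $\Phi_2(X,Y):=T_2(X,\Phi Y)$, is a bijection between $T_2$-skew-adjoint $(1,1)$-tensors on $M$ and differential $2$-forms on $M$, with inverse given by raising an index with $T_2^{-1}$; composing it with $W\leftrightarrow\Phi$ yields the asserted canonical identification, which depends only on the choice of $T_2$. Along the way one checks $\alpha=i_{\dot d}\Phi_2$, in accordance with the general description of relativistic work forms as $i_{\dot d}$ of a horizontal $2$-form (via the generators $i_{\dot d}(dq^i\wedge dq^j)$ of the contact system); the uniqueness of the $2$-form \emph{on $M$} representing a given such $\alpha$ is the injectivity of $i_{\dot d}$ on $2$-forms of $M$, already noted before the statement.

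The one point deserving care — the ``obstacle'' — is the equivalence used in the third paragraph: linearity of $w$ in the velocities makes $\dot\alpha=-T_2(\Phi(\dot d),\dot d)$ a quadratic form on each fibre of $TM$, so that its identical vanishing is equivalent, by polarization, to the antisymmetry of $(X,Y)\mapsto T_2(\Phi X,Y)$, a condition living on $M$ rather than on $TM$; without linearity neither this reduction nor the descent of $\Phi$ to $M$ would be available. Everything else is routine index bookkeeping with $T_2$.
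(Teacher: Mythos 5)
Your proposal is correct and follows essentially the same route as the paper: linearity yields an endomorphism field $\Phi$ on $M$ with $w=\Phi(\dot d)$, the Newton--Lagrange relation $i_wT_2+\alpha=0$ combined with the relativistic (contact) condition gives $T_2(\Phi(\dot d),\dot d)=0$, and polarization identifies $\Phi$ with an alternating tensor, i.e.\ a $2$-form on $M$ via the metric. Your direct computation of $\dot\alpha=\la\alpha,\dot d\ra$ in place of the paper's contraction of $i_{\Phi(\dot d)}T_2+i_{\dot d}\Phi_2=0$ with $\dot d$ is only a cosmetic difference, and your closing remark on the injectivity of $i_{\dot d}$ on $2$-forms of $M$ matches the paper's final observation.
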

\bigskip

 \section*{Aknowledgements}
 This note is a result of a work in collaboration with my dear
 friend and colleague, Prof. R. Alonso.
 \bigskip

\renewcommand*{\refname}{Bibliografía}

\end{document}